\begin{document}

\title{Modelling dynamic programming problems by generalized d-graphs}

\maketitle

\oneauthor{\href{http://www.ms.sapientia.ro/~katai_zoltan/}{Zolt\'an K\'atai}}{\href{http://www.emte.ro}{Sapientia Hungarian University of Transylvania}\\\href{http://www.ms.sapientia.ro}{Department of Mathematics and Informatics},\\ Tg. Mure\c s, Romania}{\href{mailto:katai\_zoltan@ms.sapientia.ro}{katai\_zoltan@ms.sapientia.ro} 
}

\short{Z. K\'atai}{Dynamic programming and d-graphs}

\begin{abstract}
In this paper we introduce the concept of generalized d-graph
(admitting cycles) as special dependency-graphs for modelling
dynamic programming (DP) problems. We describe the d-graph
versions of three famous single-source shortest algorithms (The
algorithm based on the topological order of the vertices, Dijkstra
algorithm and Bellman-Ford algorithm), which can be viewed as
general DP strategies in the case of three different class of
optimization problems. The new modelling method also makes
possible to classify DP problems and the corresponding DP
strategies in term of graph theory.
\end{abstract}


\section{Introduction}

Dynamic programming (DP) as optimization method was proposed by
Richard Bellman in 1957 \cite{1}. Since the first book in applied
dynamic programming was published in 1962 \cite{2} DP has become a
current problem solving method in several fields of science:
Applied mathematics \cite{2}, Computer science \cite{3}, Artificial
Intelligence \cite{6}, Bioinformatics \cite{4}, Macroeconomics \cite{13}, etc.
Even in the early book on DP \cite{2} the authors drew attention to
the fact that some dynamic programming strategies can be
formulated as graph search problems. Later this subject was
largely researched. As recent examples: Georgescu and Ionescu
introduced the concept of DP-tree \cite{7}; K\'atai \cite{8} proposed d-graphs
as special hierarchic dependency-graphs for modelling DP problems;
Lew and Mauch \cite{14,15,16}  used specialized Petri Net models to
represent DP problems (Lew called his model Bellman-Net).

All the above mentioned modelling tools are based on cycle free
graphs. As Mauch \cite{16} states, circularity is undesirable if Petri
Nets represent DP problem instances. On the other hand, however,
there are DP problems with ``cyclic functional equation'' (the chain
of recursive dependences of the functional equation is cyclic).
Felzenszwalb and Zabih \cite{5} in their survey entitled \emph{Dynamic
programming and graph algorithms in computer vision} recall that
many dynamic programming algorithms can be viewed as solving a
shortest path problem in a graph (see also \cite{9,11,12}. But,
interestingly, some shortest path algorithms work in cyclic graphs
too. K\'atai, after he has been analyzing the three most common
single-source shortest path algorithms (The algorithm based on the
topological order of the vertices, Dijkstra algorithm and
Bellman-Ford algorithm), concludes that all these algorithms apply
cousin DP strategies \cite{10,17}. Exploiting this observation K\'atai
and Csiki  \cite{12} developed general DP algorithms for discrete
optimization problems that can be modelled by simple digraphs (see
also  \cite{11}). In this paper, modelling finite discrete optimization
problems by generalized d-graphs (admitting cycles), we extend the
previously mentioned method for a more general class of DP
problems. The presented new modelling method also makes possible
to classify DP problems and the corresponding DP strategies in
term of graph theory.

Then again the most common approach taken today for solving
real-world DP problems is to start a specialized software
development project for every problem in particular. There are
several reasons why is benefiting to use the most specific DP
algorithm possible to solve a certain optimization problem. For
instance this approach commonly results in more efficient
algorithms. But a number of researchers in the above mentioned
various fields of applications are not experts in programming.
Dynamic programming problem solving process can be divided into
two steps: (1) the functional equation of the problem is
established (a recursive formula that implements the principle of
the optimality); (2) a computer program is elaborated that
processes the recursive formula in a bottom-up way  \cite{12}. The first
step is reachable for most researchers, but the second one not
necessary. Attaching graph-based models to DP problems results in
the following benefits:

\begin{itemize}\addtolength{\itemsep}{-0.6\baselineskip}
\item it moves DP problems to a well research area: graph theory,
\item it makes possible to class DP strategies in terms of graph theory,
\item as an intermediate representation of the problem (that hides, to some degree,
the variety of DP problems) it enables to automate the programming part
of the problem-solving process by an adequately developed software-tools  \cite{12}, 
\item a general software-tool that automatically solves DP problems
(getting as input the functional equation) should be able to save
considerable software development costs  \cite{16}.
\end{itemize}

\section{Modelling dynamic programming problems}

DP can be used to solve optimization problems (discrete, finite
space) that satisfy the principle of the optimality: \emph{The
optimal solution of the original problem is built on optimal
sub-solutions respect to the corresponding sub-problems.} The
principle of the optimality implicitly suggests that the problem
can be decomposed into (or reduced to) similar sub-problems.
Usually this operation can be performed in several ways. The goal
is to build up the optimal solution of the original problem from
the optimal solutions of its smaller sub-problems. Optimization
problems can often be viewed as special version of more general
problems that ask for all solutions, not only for the optimal one
(A so-called objective function is defined on the set of
sub-problems, which has to be optimized). We will call this
general version of the problem, all-solutions-version.

The set of the sub-problems resulted from the decomposing process
can adequately be modelled by dependency graphs (We have proposed
to model the problem on the basis of the previously established
functional equation that can be considered the output of the
mathematical part and the input of the programming part of the
problem solving process). The vertices (continuous/dashed line
squares in the optimization/all-solutions version of the problem;
see Figures \ref{f2}.a,b,c) represent the sub-problems and directed arcs
the dependencies among them. We introduce the following concepts:

\begin{itemize}\addtolength{\itemsep}{-0.6\baselineskip}
\item \emph{Structural-dependencies:} We have directed arc from vertex A to
vertex B if solutions of sub-problem A \emph{may directly depend}
on solutions of sub-problem B (dashed arcs; see Figure \ref{f2}.a).
\item \emph{Optimal-dependencies:} We have directed arc from vertex A to
 vertex B if the optimal solution of sub-problem A \emph{directly
 depends} on the optimal solution of the \emph{smaller (respect to the optimization
 process)} sub-problem B (continuous arcs; see Figure \ref{f2}.b).
\item \emph{Optimization-dependencies:} We have directed arc from vertex A to
vertex B if the optimal solutions of sub-problem A \emph{may
directly depend} on the optimal solution of the \emph{smaller
(respect to the optimization process)} sub-problem B (dotted arcs;
see Figure \ref{f2}.c).
\end{itemize}

Since structural dependencies reflect the structure of the
problem, the \emph{struc\-tu\-ral-dependencies-graph} can be
considered as input information (It can be built up on the basis
of the functional equation of the problem). This graph may contain
cycles (see Figure \ref{f2}.a). According to the principle of the
optimality the \emph{optimal-depen\-den\-cies-graph} is a rooted
sub-tree (acyclic sub-graph) of the structural-dependencies-graph.
Representing the structure of the optimal solution the
optimal-depen\-den\-cies-graph can be viewed as output information.
Since opti\-mi\-za\-tion-depen\-den\-cies are such structural-depen\-den\-cies
that are \emph{compatible} with the principle of the optimality,
the \emph{optimization-depen\-den\-cies-graph} is a maximal rooted
sub-tree of the structural-dependencies-graph that includes the
optimal-dependencies-tree. Accordingly, the vertices of the
opti\-mi\-za\-tion-dependencies-graph (and implicitly the vertices of
the optimal-dependencies-graph too) can be arranged on levels
(hierarchic structure) in such a way that all its arcs are
directed downward. The original problem (or problem-set) is placed
on the highest level and the trivial ones on the lowest level. We
consider that a sub-problem is structurally trivial if cannot be
decomposed into, or reduced to smaller sub-sub-problems. A
sub-problem is considered to be trivial regarding the optimization
process if its optimal solution trivially results from the input
data. If the structural-dependencies-graph contains cycles, then
completing the hierarchic optimization-dependencies-graph to the
structural-dependencies-graph some added arcs will be directed
upward.

Let us consider, as an example, the following problem: Given the
weighted undirected triangle graph OAB determine

\begin{itemize}\addtolength{\itemsep}{-0.6\baselineskip}
\item all paths from vertex O (origin) to the all vertices (O, A, B) of the graph (Figure \ref{f1}.a),
\item the maximal length paths from vertex O (origin) to the all vertices of the graph
($|OA|=10, |OB|=10, |AB|=100$) (Figure \ref{f1}.b),
\item the minimal length paths from vertex O (origin) to the all vertices of the graph
($|OA|=100, |OB|=10, |AB|=10$) (Figure \ref{f1}.c).
\end{itemize}

\begin{figure}[t!]
\setlength{\unitlength}{1mm}
\begin{picture}(300,40)(0,40)
\put(5,70){\circle*{2}} \put(0,69){A}  \put(5,70){\line(1,0){30}}  
\put(35,70){\circle*{2}} \put(37,69){B} \put(5,70){\line(1,-1){15}}
\put(20,55){\circle*{2}} \put(19,50){O} \put(35,70){\line(-1,-1){15}}
\put(19,43){(a)}

\put(50,70){\circle*{2}} \put(45,69){A}  \put(50,70){\line(1,0){30}}  
\put(80,70){\circle*{2}} \put(82,69){B} \put(50,70){\line(1,-1){15}}
\put(65,55){\circle*{2}} \put(64,50){O} \put(80,70){\line(-1,-1){15}}
\put(62,73){100} \put(53,60){10} \put(73,60){10}
\put(64,43){(b)}

\put(95,70){\circle*{2}} \put(90,69){A}  \put(95,70){\line(1,0){30}}  
\put(125,70){\circle*{2}} \put(127,69){B} \put(95,70){\line(1,-1){15}}
\put(110,55){\circle*{2}} \put(109,50){O} \put(125,70){\line(-1,-1){15}}
\put(109,73){10} \put(95,60){100} \put(119,60){10}
\put(109,43){(c)}
\end{picture}

\caption{The triangle graph}\label{f1}
\end{figure}
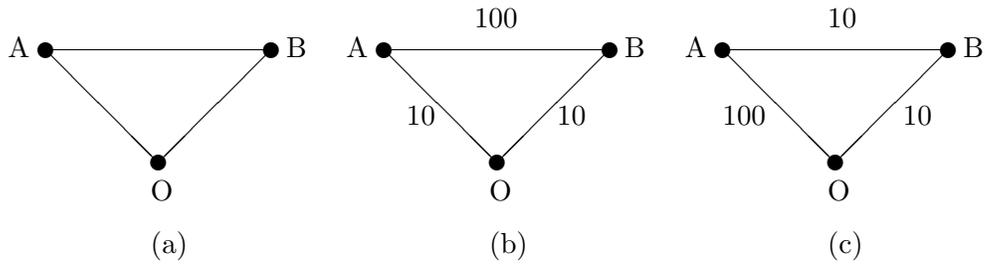

Since path (O,A,B) includes path (O,A) and, conversely, path
(O,B,A) includes path (O,B) the structural-dependencies-graph that
can be attached to the problem is not cycle free (Figure \ref{f2}.a). We
have bidirectional arcs between vertices representing sub-problems
A (determine all paths to vertex A) and B (determine all paths to
vertex B). Since the maximizing version of the problem does not
satisfy the principle of the optimality (the maximal path (O,B,A)
includes path (O,B) that is not a maximal path too), in case b the
optimal-dependencies-tree and the optimization-dependencies-tree
are not defined. Figures \ref{f2}.b and \ref{f2}.c present the optimal- and
optimization-dependencies-graphs attached to the minimizing
version of the problem.

\begin{figure}[htbp]
\centering\includegraphics[width=12cm]{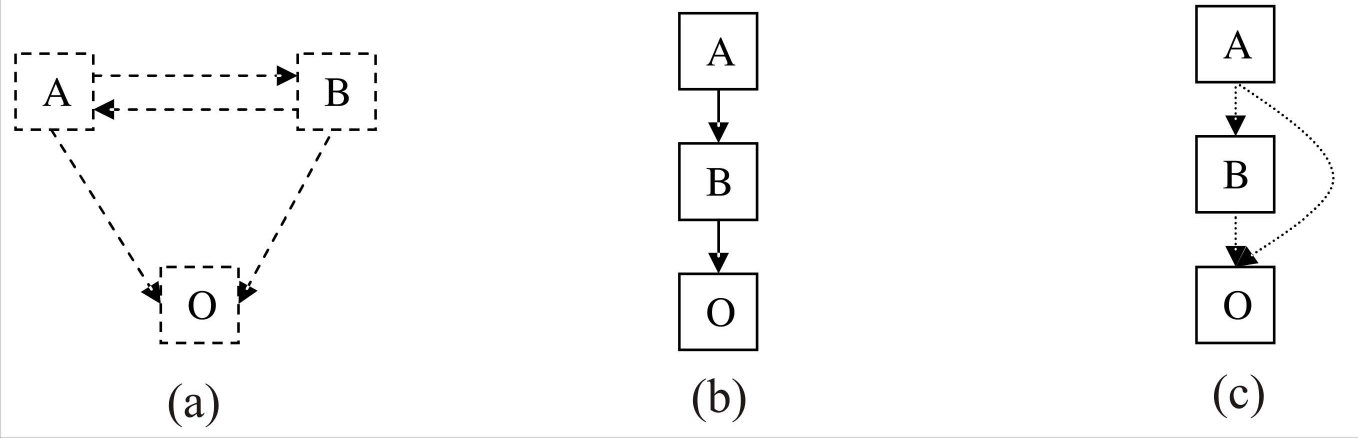}
\caption{Structural/Optimal/Optimization-dependencies-graphs}\label{f2}
\end{figure}

\section{d-graphs as special dependency graphs}

Since decomposing usually means that the current problem is broken
down into \emph{two or more} immediate sub-problems $(1 \rightarrow
N$ dependency) and since this operation can often be performed in
\emph{several} ways, K\'atai \cite{8} introduced d-graphs as special
dependency graphs for modelling such problems. In this paper we
define a generalized form of d-graphs as follows:

\pagebreak

\begin{definition} The connected weighted bipartite finite digraph
$G_{d} (V, E, C)$ is a d-graph if:

\begin{itemize}\addtolength{\itemsep}{-0.6\baselineskip}
\item $V = V_{p} \bigcup V_{d}$ and $E = E_{p} \bigcup E_{d},$ where
\begin{itemize}
\item $V_{p}$ is the set of the p-vertices,
\item $V_{d}$ is the set of the d-vertices,
\item all in/out neighbours of the p-vertices (excepting the source/sink vertices)
 are d-vertices;  each d-vertex has exactly one p-in-neighbour;
 each d-vertex has at least one p-out-neighbour,
\item $E_{p}$ is the set of p-arcs (from p-vertices to d-vertices),
\item $E_{d}$ is the set of d-arcs (from d-vertices to p-vertices),
\end{itemize}
\item function $C: E_{p} \rightarrow \mathbb{R}$ associates a cost to every p-arc. We consider d-arcs of zero cost.
\end{itemize}
\end{definition}

If a d-graph is cycle-free, then its vertices can be arranged on
levels (hierarchic structure) (see Figure \ref{f3}). In \cite{8} K\'atai
defines, respect to hierarchic d-graphs, the following related
concepts: d-sub-graph, d-tree, d-sub-tree, d-spanning-tree,
optimal d-spanning-tree and optimally weighted d-graph.

\vspace{0cm}
\begin{figure}[t!]
\begin{center}
\includegraphics[width=12cm]{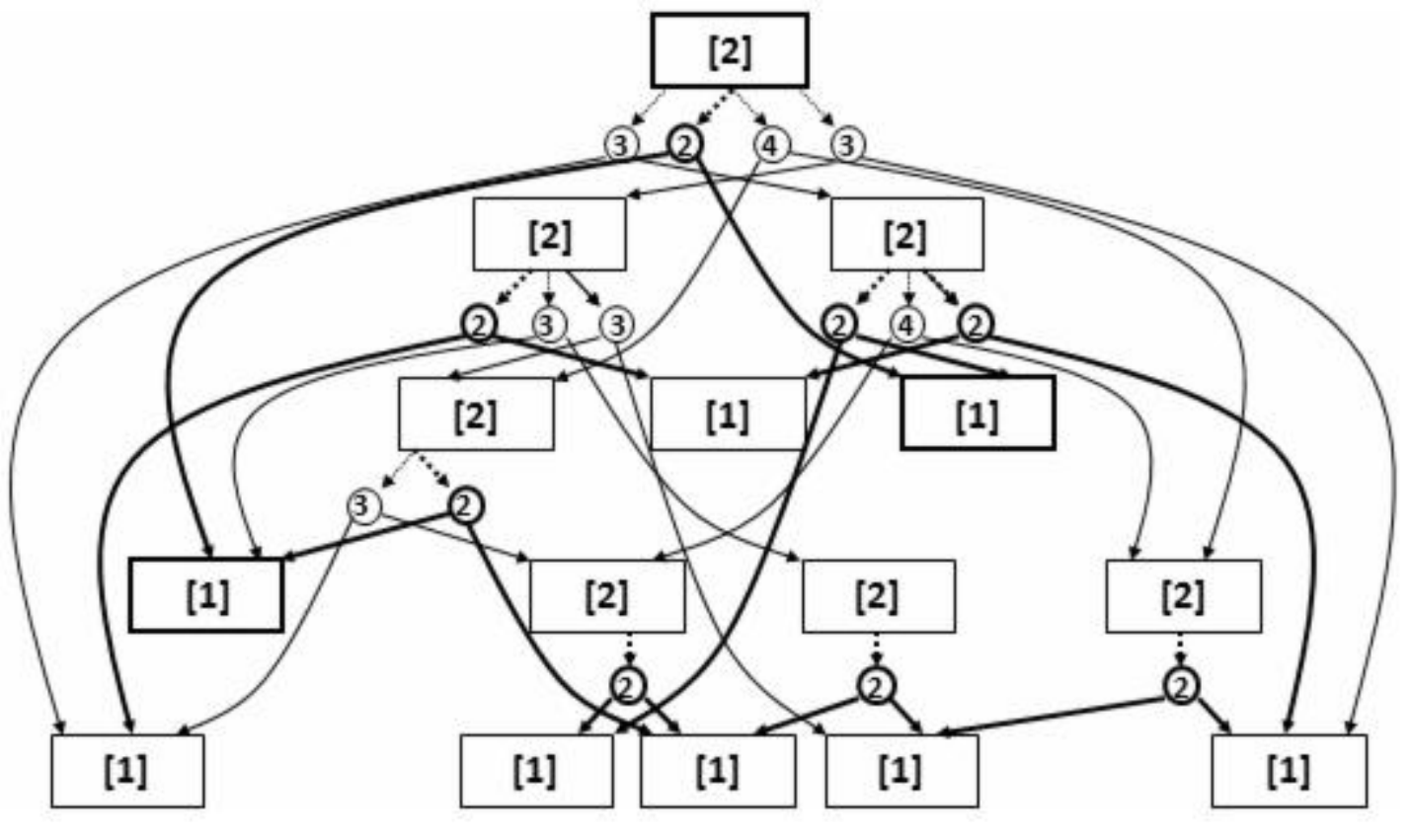}
\end{center}
\caption{Hierarchic d-graph. p- and d-vertices are represented
by rectangles and circles, respectively (We used bolded lines to
emphasize the optimal d-spanning-(sub)trees)}\label{f3}
\end{figure}

\section{Modelling optimization problems by d-graphs}

According to K\'atai \cite{8} a hierarchic d-graph can be viewed as
representing the optimization-de\-pen\-den\-ces-graph corresponding to
the original problem and d-sub-graphs to the sub-problems. Since
there is a one-to-one correspondence between p-vertices and
d-sub-graph \cite{8}, these vertices also represent the sub-problems.
The source p-vertex (or vertices) is attached to the original
problem (or original problem-set), and the sink vertices to the
structurally trivial sub-problems. A p-vertex has as many d-sons
as the number of possibilities to decompose the corresponding
sub-problem into its \emph{smaller immediate} sub-sub-problems. A
d-vertex has as many p-sons as the number of immediate smaller
sub-problems (N) resulted through the corresponding
\emph{breaking-down} step (1 $\rightarrow N$ dependency between the
p-grand\-fa\-ther-problem and the corresponding p-grandson-problems).
We will say that a grandfather-problem is \emph{reduced to} its
grandson-problem if the intermediary d-vertex has a single p-son
(1 $\rightarrow 1$ dependency). Parallel decomposing processes may
result in identical sub-problems, and, consequently, the
corresponding p-vertex has multiple p-grandfathers (through
different d-fathers). Due to this phenomenon the number of the
sub-problems may depend on the size of the input polynomially. The
d-spanning-trees of the d-(sub)graphs represent the corresponding
(sub)solutions, more exactly their tree-structure. The number of
all solutions of the problem usually depends on the size of the
input exponentially.

For example, if a p-vertex has n d-sons, these d-sons have $m_{1},
m_{2}, \ldots, m_{n}$ p-sons, and these p-son-problems have $(r_{1,1},
r_{1,2}, \ldots, r_{1,m1}), (r_{1,1},$ $ r_{1,2}, \ldots, r_{1,m2}),$
$(r_{1,1}, r_{1,2}, \ldots, r_{1,mn})$ solutions, respectively, then
from the $\sum\sum r_{ij}$ solution of the p-grandson-problems
results $\sum\prod r_{ij}$ solution for the common
p-grand\-fa\-ther-problem. The number of solutions exponentially
exceeds the number of sub-problems. The $\sum$-operator reflects
the OR-connection between d-brothers and the $\prod$-operator the
AND-connection between p-brothers.

\section{Dynamic programming strategy on the \\ optimization-dependencies d-graph}

In the case of optimization problems we are interested only in the
\emph{optimal} solution of the original problem. Dynamic
programming means building up the optimal solutions of the
\emph{larger} sub-problems from the optimal solution of the
already solved \emph{smaller} sub-problems (starting with the
optimal solution of the trivial sub-problems). Accordingly, (1) DP
works on the hierarchic optimization-dependencies d-graph that can
be attached to the problem, and (2) it deals with one solution per
sub-problem, with the optimal one (DP strategies usually result in
polynomial algorithms).

In line with this K\'atai \cite{8} defines two weight-functions $(w_{p}:
V_{p} \rightarrow \mathbb{R}, w_{d}: V_{d} \rightarrow \mathbb{R})$
on the sets of p- and d-vertices of the attached hierarchic
d-graph. Whereas the weight of a p-vertex is defined as the
optimum (minimum/maxi\-mum) of the weights of its d-sons, the weight
of a d-vertex is a function (depending on the problem to be
modelled) of the weights of its p-sons. We consider the weight of
a d-vertex to be optimal if is based on optimal the weights of its
p-sons. The optimal weight of a p-vertex (excluding the sink
vertices) is equal with the minimum/maximum of the optimal weights
of its d-sons. The optimal weights of the p-sinks trivially result
from the input data of the problem. Accordingly: the optimal
weights of the p-vertices are computed (1) in \emph{optimal way},
(2) on the basis of the \emph{optimal weights} of their
p-descendents. This means bottom-up strategy. Computing the
optimal weights of the p-vertices we implicitly have their optimal
d-sons (It is characteristic to DP algorithms that during the
bottom-up building process it stores the already computed optimal
p-weights in order to have them at hand in case they are needed to
compute further optimal p-weights. If we also store the optimal
d-sons of the p-vertices, then this information allows a quick
reconstruction of the optimal d-spanning-tree in top-down way
\cite{17,18}).

Defining the costs of the p-arcs as the absolute value of the
weight-difference of its endpoints we get an optimally weighted
d-graph with zero-cost minimal d-spanning-tree. We denote these
kinds of p-arc-cost-functions by $C^{*}$ \cite{8}. Modelling optimization
problems by a d-graphs $G_{d}(V,E,C^{*})$ includes choosing
functions $w_{p}$ and $w_{d}$ in such a way as the optimal weights of
the p-vertices to represent the optimum values of the objective
function respect to the corresponding sub-problems (These
functions can be established on the basis of the functional
equation of the problem; input information regarding the modelling
process).

\begin{proposition}\label{p1} If an optimization problem can be modelled
by a hierarchic d-graph $G_{d}(V,E,C^{*})$ (as we described above),
then it can be solved by dynamic programming.
\end{proposition}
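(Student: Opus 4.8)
The plan is to show that the hierarchic (hence cycle-free) structure of $G_{d}(V,E,C^{*})$ supplies a well-founded order along which the recursively defined optimal weights can be computed, and that this bottom-up computation yields exactly the optimum of the objective function for the original problem. Since dynamic programming is precisely such a bottom-up evaluation of optimal sub-solutions, exhibiting one correct procedure proves the claim.

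First I would invoke the hierarchic property: because $G_{d}$ is cycle-free, its vertices can be arranged on levels so that every arc is directed downward, with the source p-vertex on the top level and the p-sinks at the bottom. Processing the levels from the bottom upward therefore handles every descendant of a vertex before the vertex itself, which is exactly the order required for a bottom-up evaluation of the recurrence. I would then describe the computation level by level. The base case is immediate: the optimal weights of the p-sinks are given directly by the input data, as stipulated in the modelling of the trivial sub-problems. In the inductive step, at each higher level a d-vertex receives its optimal weight $w_{d}$ as the prescribed function of the optimal weights of its p-sons, all of which lie at strictly lower levels and are therefore already available; each p-vertex then receives its optimal weight $w_{p}$ as the minimum/maximum over the optimal weights of its d-sons, again already computed. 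The leveling guarantees that every quantity needed is present when required, so the procedure is well-founded and terminates.

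The main work, and the step I expect to be the crux, is the correctness argument, proved by induction on the level. The claim to close is that, for every p-vertex, the value obtained by optimizing over its d-sons the quantities computed from the optimal descendant weights actually equals the true optimal weight of the corresponding sub-problem. This is where the principle of the optimality, assumed for any d-graph-modellable problem, is indispensable: it asserts that an optimal solution of a sub-problem is built on optimal sub-sub-solutions, so no better solution can arise from combining non-optimal parts. The delicate point is verifying that feeding the optimal p-son weights into $w_{d}$ yields the optimal d-weight (this is built into the definition of an optimal d-vertex weight) and that the subsequent min/max over the d-sons does not discard the genuine optimum; both follow from the principle of the optimality together with the fact that every admissible decomposition of a sub-problem is represented by some d-son.

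Finally, carrying the induction up to the top level shows that the optimal weight of the source p-vertex equals the optimum value of the objective function for the original problem. Moreover, because $C^{*}$ is chosen so that the optimal d-spanning-tree has zero cost, the optimal solution itself can be recovered by tracing the stored optimal d-sons in a top-down pass. This exhibits a correct bottom-up dynamic-programming solution and completes the proof.
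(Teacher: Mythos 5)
Your proposal is correct and follows essentially the same route as the paper: a bottom-up evaluation of the $w_p$/$w_d$ recurrence in reverse topological (level) order, with correctness resting on the optimal-substructure property (the paper phrases it as ``d-sub-trees of an optimal d-spanning-tree are optimal d-spanning-trees of their d-sub-graphs,'' you phrase it as an induction on levels invoking the principle of the optimality), and the same conclusion that the source weight gives the optimum value while the stored optimal d-sons recover its structure. Your write-up is simply a more explicit, spelled-out version of the paper's one-paragraph argument.
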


\begin{proof} Since in an optimally weighted d-graph d-sub-trees
of an optimal d-spanning-tree are also optimal d-spanning-trees
respect to the d-sub-graphs defined by their root-vertices,
computing the optimal p- and d-weights according to a reverse
topological order of the vertices (based on
optimization-dependencies) implicitly identifies the optimal
d-spanning-tree of the d-graph. This bottom-up strategy means DP
and the optimal solution of the original problem will be
represented by the weight of the source vertex (as value) and by
the minimal d-spanning-tree (as structure).
\end{proof}

Computing the optimal weight of a p-vertex (expecting vertices
representing trivial sub-problems) can be implemented as a gradual
updating process based on the weights of its d-sons. The weights
of p-vertices representing trivial sub-problems receive as
starting-value their input optimal value. For other p-vertices we
choose a proper starting-value according to the nature of the
optimization problem (The weights of d-vertices are recomputed
before every use). We define the following types of updating
operations along p-arcs (if the weight of a certain d-son is
``better'' than the weight of his p-father, then the father's weight
is replaced with the son's weight):

\begin{itemize}\addtolength{\itemsep}{-0.6\baselineskip}
\item \emph{Complete:} based on the optimal value of the corresponding d-son.
\item \emph{Partial:} based on an intermediate value of the corresponding d-son.
\item \emph{Effective:} effectively improves the weight of the corresponding p-vertex.
\item \emph{Null:} dose not adjusts the weight of the corresponding p-vertex.
\item \emph{Optimal:} sets the optimal weight for the corresponding p-vertex.
Optimal updates are complete and effective too.
\end{itemize}


\section{d-graph versions of three famous single-source \\ shortest-path algorithms}

As we mentioned above, K\'atai concludes that the three famous
single-source shortest-path algorithms in digraphs (The algorithm
based on the topological order of the vertices, Dijkstra algorithm
and Bellman-Ford algorithm) apply cousin DP strategies \cite{10,17}.
The common representative core of these DP algorithms is that the
optimal weights (representing the optimal lengths to the
corresponding vertices) are computed on account of updating these
values along the arcs of the shortest-paths-tree according to
their topological order (optimal-updating-sequence). Since this
optimal tree is unknown (it represents the solution of the
problem) all the three algorithms generate updating-sequences
which contain, as subsequence, an optimal-updating-sequence
necessary for the dynamic programming building process. The basic
difference among the three algorithms is the way they generate a
proper arc-sequence and the corresponding updating-sequence.

\vspace{0cm}

\begin{figure}[t!]
\centering\includegraphics[height=16cm]{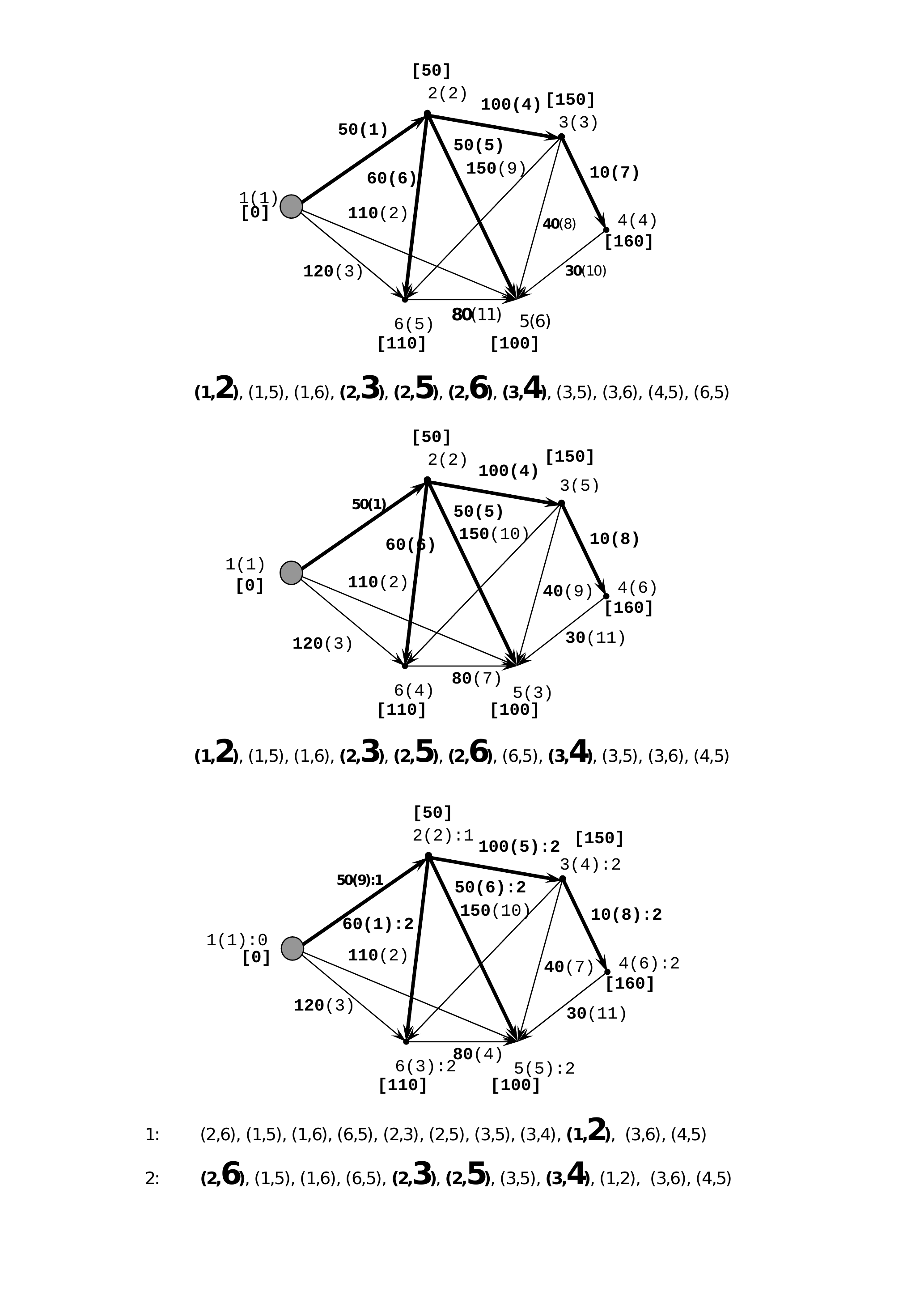}
\caption{The strategies of the (a) Topological, (b) Dijkstra and
(c) Bellman-Ford algorithms (we bolded the optimal-arc-subsequence
of the generated arc-sequences)} \label{f4}
\end{figure}

In case the input digraph is acyclic we get a proper arc-sequence
by ordering all the arcs of the graph topologically (this order
can even be determined in advance).  Dijkstra algorithm (working
in cyclic graphs too, but without negative weighted arcs)
determines the needed arc-sequence on the fly (parallel with the
updating process). After the current weight of the next closest
vertex has been confirmed as optimal value (greedy decision), the
algorithm performs updating operations along the out-arcs of this
vertex (This greedy choice can be justified as follows: if other
out-neighbours of the growing shortest-paths-tree are farther -
from the source vertex - than the currently closest one, then
through these vertices cannot lead shortest paths to this).
Bellman-Ford algorithm (working in cyclic graphs with negative
weighted arcs too, but without feasible negative weighted cycles)
goes through (in arbitrary order) all the arcs of the graph again
and again until the arc-sequence generated in this way finally
will contains, as sub-sequence, an optimal-updating-sequence (see
Figure \ref{f4}, \cite{10}). The following d-graph algorithms implement DP
strategies that exploit the core idea behind the above described
single-source shortest-paths algorithms.

\subsection{Building-up the optimization-dependencies d-graph in \\ bottom-up way}

Our basic goal is to perform updating operation along the p-arcs
of the optimal-dependencies-tree according to their reverse
topological order. We will call such arc sequences
optimal-arc-sequence and the corresponding updating sequences
optimal-updating-sequence. An optimal-updating-sequence surely
results in building up the optimal value representing the optimal
solution of the problem.  Since the optimal-dependencies-tree is
unknown (it represents the structure of the optimal solution to be
determined), we should try to elaborate \emph{complete arc
sequences} that includes the desired optimal-updating-sequence
(gratuitous updating operations have, at the worst, null effects).

We introduce the following colouring-convention:

\begin{itemize}\addtolength{\itemsep}{-0.6\baselineskip}
\item Initially all vertices are white.
\item A p-vertex changes its colour to grey after the first
attempt to update its weight. d-vertices automatically change
their colour to grey if they do not have any more white p-sons.
\item When the weight of a vertex riches its optimal value
its colour is automatically changed to black.
\end{itemize}

We are facing a gratuitous updating operation if:
\begin{itemize}\addtolength{\itemsep}{-0.6\baselineskip}
\item along the corresponding p-arc was previously performed a complete update,
\item the corresponding p-father is already black,
\item the corresponding d-son is still grey or white.
\end{itemize}

Since the optimal values of trivial sub-problems automatically
results from the input data of the problem, the corresponding
p-vertices are automatically coloured with black.

The following propositions can be viewed as theoretical support
for the below strategies that build up the optimal-dependencies
d-graph (on the basis of the structural-dependencies-graph that
can be considered input information) level-by-level in bottom-up
way (At the beginning all p-vertices are places at level 0. All
effective updates along the corresponding p-arcs move their p-end
to higher level than the highest p-son of their d-end.).

\begin{proposition}\label{p2} If the structural-dependencies d-graph
attached to an optimization problem that satisfies the principle
of the optimality has no black p-sources, then there exists at
least one p-arc corresponding to an \emph{effective complete}
updating operation.
\end{proposition}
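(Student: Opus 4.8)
The plan is to argue by descent on the hierarchic optimization-dependencies d-graph. Because the problem satisfies the principle of the optimality, this hierarchic (hence acyclic) sub-tree of the structural-dependencies d-graph is well defined, and any effective complete update along one of its p-arcs is also such an update in the structural-dependencies d-graph (the p-arcs of the hierarchy are a subset of the structural p-arcs, and being \emph{effective} and \emph{complete} is a property of an arc relative to the current weights and colours). Hence I may reason entirely inside the acyclic hierarchy. Recall that the p-sinks (trivial sub-problems) are automatically black and that, by hypothesis, no p-source is black; so the set of non-black p-vertices is nonempty.

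First I would isolate a convenient non-black vertex. Since the hierarchy is finite and acyclic and its bottom-most p-vertices (the sinks) are black, no descending chain of non-black p-vertices can be infinite, so I can choose a non-black p-vertex $v$ that is minimal with respect to the descendant order: every p-vertex strictly below $v$ in the hierarchy is black. Operationally, one starts from a non-black p-source and repeatedly moves to a non-black p-descendant; acyclicity and finiteness guarantee the process terminates exactly at such a $v$.

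Next I would exhibit the required update at $v$. Let $d^{*}$ be the d-son of $v$ realizing the optimum that defines the optimal value $w_{p}(v)$. All p-sons of $d^{*}$ lie strictly below $v$, hence are black, i.e. carry their optimal weights; since the weight of a d-vertex is recomputed before every use, recomputing $d^{*}$ yields exactly its optimal value, so the update of $v$ along the p-arc $(v,d^{*})$ is \emph{complete}. It remains to check that it is \emph{effective}. By the colouring convention a p-vertex turns black precisely when its weight reaches its optimal value, so the non-black vertex $v$ currently holds a strictly sub-optimal weight; as updates only improve weights and $d^{*}$ delivers the optimal value $w_{p}(v)$, the complete update along $(v,d^{*})$ strictly improves the weight of $v$ and is therefore effective. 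This $(v,d^{*})$ is the claimed p-arc.

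I expect the main obstacle to be the selection of $v$: one must justify that a non-black p-vertex all of whose strict descendants are black actually exists, which is exactly where finiteness and acyclicity of the hierarchic optimization-dependencies d-graph (guaranteed by the principle of the optimality) are essential, since without acyclicity the descent could loop forever among non-black vertices. A secondary point to get right is the equivalence ``$v$ not black'' $\iff$ ``the weight of $v$ is strictly sub-optimal'', which rests on the conventions that updates are monotone improvements and that black is assigned exactly at optimality; this is precisely what upgrades the complete update at $v$ from merely available to effective.
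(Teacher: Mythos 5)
Your proposal is correct, and it in fact proves in detail what the paper's own proof merely asserts. The paper's argument is a one-sentence appeal to the optimal-updating-sequence: since the principle of the optimality holds, the optimal-dependencies-tree exists, and a reverse-topological sequence of updates along its p-arcs ``continuously warrants'' the availability of an \emph{optimal} update---which by the paper's definitions is automatically complete and effective---as long as some p-source is non-black. You reach the same conclusion, but you replace the asserted continuous warrant with an explicit extremal argument: descend in the finite acyclic optimization-dependencies hierarchy to a non-black p-vertex $v$ all of whose proper p-descendants are black, then exhibit the p-arc to its optimal d-son $d^{*}$, verifying completeness (all p-sons of $d^{*}$ are black and d-weights are recomputed before use) and effectiveness (non-black means strictly sub-optimal, while $d^{*}$ delivers the optimal value of $v$) directly against the definitions of the updating operations and the colouring convention. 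What the paper's formulation buys is brevity and a direct tie to the optimal-updating-sequence machinery that is reused verbatim in Propositions \ref{p3}--\ref{p5}; what your argument buys is a genuine existence proof, making visible exactly where finiteness, acyclicity of the hierarchy, and the colouring rules enter. One background assumption both treatments leave silent is that intermediate p-weights never overshoot the optimum (so that ``non-black'' coincides with ``strictly sub-optimal''); this is implicit in the paper's convention that a vertex turns black precisely when its weight reaches the optimal value, so it is not a gap peculiar to your proof.
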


\begin{proof} Since the optimization problem satisfies the
principle of the optimality the optimal-updating-sequence there
exists and continuously warrants (while no black p-sources still
exist) the existence of optimal updating operations, which are
effective and complete too.
\end{proof}

\begin{proposition}\label{p3} Any p-arcs sequence (of the
structural-dependencies d-graph attached to an optimization
problem that satisfies the principle of the optimality) that
continuously applies non-repetitive complete updates (while such
updating operations still exist) warrants that all p-sources
become black-coloured. These p-arcs sequences contain arcs
representing optimization-depen\-den\-cies and surely include an
optimal-arc-sequence.
\end{proposition}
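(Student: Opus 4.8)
The plan is to argue in two stages, matching the two assertions of the statement: first, that the described process terminates with all p-sources black, and second, that the generated complete-update sequence contains an optimal-arc-sequence as a subsequence.

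For termination and blackness, I would first observe that $G_d$ is finite, so there are finitely many p-arcs; since the updates are required to be \emph{non-repetitive} (no p-arc carries a second complete update), at most one complete update is performed per p-arc, and the process halts after finitely many steps, when no non-repetitive complete update remains. To show that all p-sources are then black, I would argue by contradiction using Proposition~\ref{p2}: if some p-source were still non-black at termination, Proposition~\ref{p2} would furnish an \emph{effective complete} update. The key auxiliary fact I would establish is that any such update is automatically non-repetitive. This rests on a monotonicity observation. A complete update along $(A \to d)$ presupposes that the d-son $d$ already holds its optimal value, which occurs exactly when all p-sons of $d$ are black; once optimal, $d$ stays optimal (its p-sons cannot improve past optimality), so its value is fixed thereafter. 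Since p-weights only ever improve, the first complete update along $(A \to d)$ already drove $w_p(A)$ to be at least as good as $d$'s fixed optimal value, so any later complete update along the same arc is null, hence not effective. Consequently the effective complete update promised by Proposition~\ref{p2} must occur along a fresh arc, contradicting termination. This forces every p-source to be black.

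For the second assertion, I would track how p-vertices acquire their black colour. A p-vertex turns black only through an \emph{optimal} update, which by the definitions is both complete and effective, and therefore appears in the generated complete-update sequence. An optimal update of $A$ is performed along the arc $(A \to d^{*})$ to its optimal d-son $d^{*}$, and it requires $d^{*}$ to be optimal, i.e.\ all p-sons of $d^{*}$ to be black already. Hence, reading off the optimal updates in the order they occur, the optimal update of each p-vertex is preceded by those of its children in the optimal d-spanning-tree; this is precisely the reverse topological order along the optimal-dependencies-tree. Restricting the generated sequence to these optimal updates therefore exhibits an optimal-arc-sequence as a subsequence, and each arc $(A \to d^{*})$ it uses is an optimal, hence optimization-, dependency arc.

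The main obstacle I expect is the auxiliary non-repetitiveness fact in the first stage: one must argue carefully that the effective complete update supplied by Proposition~\ref{p2} cannot be a repeat along an already-used arc. The whole contradiction hinges on the monotone, once-optimal-always-optimal behaviour of d-sons together with the monotone improvement of p-weights, so I would state and justify this monotonicity explicitly before invoking Proposition~\ref{p2}.
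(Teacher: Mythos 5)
Your proof is correct, and it is substantially more careful than the paper's own argument, which it also inverts in an interesting way. The paper's proof runs in one direction: the principle of optimality (via Proposition~\ref{p2}) warrants the \emph{continuous existence} of optimal updating operations while a non-black p-source remains, and from this it simply asserts that any sequence of non-repetitive complete updates ``includes an optimal-arc-sequence, and consequently'' blackens all p-sources --- leaving implicit both why the process cannot stall before that point and why the arbitrarily chosen complete updates must pick up the optimal ones. You argue in the opposite direction: first termination (finitely many p-arcs plus non-repetitiveness) and blackness of the sources by contradiction with Proposition~\ref{p2}, and only then do you extract the optimal-arc-sequence \emph{a posteriori}, by tracking how each p-vertex acquired its black colour and observing that these black-making updates are optimal updates occurring necessarily in reverse topological order of the optimal-dependencies-tree. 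The two genuine additions in your version are exactly the steps the paper glosses over: the monotonicity lemma (a d-son, once optimal, is frozen, and p-weights only improve, so an effective complete update can never be a repeat --- this is what lets Proposition~\ref{p2} contradict termination), and the explicit ordering argument for why the subsequence of optimal updates constitutes an optimal-arc-sequence rather than a mere set of optimal arcs. What the paper's route buys is brevity and a direct appeal to the intuition behind Proposition~\ref{p2}; what yours buys is a self-contained, rigorous proof in which each of the two claims of the proposition is discharged by an identifiable argument rather than by the word ``warrants.''
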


\begin{proof} Since the optimization problem satisfies the
principle of the optimality the optimal-updating-sequence there
exists and warrants the continuous existence of optimal updating
operations (which are also effective and complete) while no black
p-sources still exist. Accordingly any p-arcs sequence that
con\-tinuously applies non-repetitive complete updates includes an
optimal-arc-sequence, and consequently results in colouring all
p-sources with black.
\end{proof}

\begin{proposition}\label{p4} If the structural-dependencies d-graph
attached to an optimization problem that satisfies the principle
of the optimality is cycle-free, then any reverse topological
order of the all p-arcs continuously applies non-repetitive
complete updates, and consequently, results in building up the
optimal solution of the problem.
\end{proposition}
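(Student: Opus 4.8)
The plan is to reduce Proposition~\ref{p4} to Proposition~\ref{p3}, which has already established that any p-arc sequence applying non-repetitive complete updates (while such updates exist) colours all p-sources black and contains an optimal-arc-sequence. Thus the only genuinely new content here is the \emph{acyclic} case: I must show that in a cycle-free structural-dependencies d-graph, traversing all p-arcs in a \emph{reverse topological order} is a legitimate instance of the strategy hypothesised in Proposition~\ref{p3}. Concretely, I would argue two things: first, that processing p-arcs in reverse topological order guarantees every complete update is performed \emph{at a point where it is genuinely complete}, i.e. the d-son's weight is already optimal; and second, that no p-arc ever needs to be revisited, so the updates are automatically non-repetitive.

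First I would set up the hierarchic structure. By the discussion preceding Definition~1 and the remarks in the excerpt, a cycle-free d-graph admits a level assignment with all arcs directed downward, the trivial (sink) sub-problems sitting at the bottom with their optimal weights given by the input data, and hence coloured black from the outset. A reverse topological order of the p-arcs then processes arcs ``from the bottom up'': when a p-arc $(A,d)$ from a p-father $A$ to a d-son $d$ is reached, all p-arcs lying strictly below it -- in particular all p-arcs emanating from the p-sons of $d$ -- have already been processed. The key inductive claim is that at the moment we process $(A,d)$, the d-son $d$ has already reached its optimal weight and is therefore black: by the definition of $w_d$ as a function of the optimal weights of its p-sons, and by the induction hypothesis that those p-sons are already optimal (black), recomputing the weight of $d$ before use yields its optimal value. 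Consequently the update along $(A,d)$ is based on the optimal value of $d$, which is exactly the definition of a \emph{complete} update.

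The induction base is the p-sinks, whose optimal weights are supplied by the input and are black by convention. For the inductive step I would invoke the optimality definition of $w_p$ (the optimum of the optimal d-son weights): once every p-arc into a given p-vertex $A$ has been traversed -- which reverse topological order guarantees happens before any arc leaving $A$ is processed -- $A$ will have been updated against the optimal weight of each of its d-sons, and so $A$ itself attains its optimal weight and turns black. This simultaneously verifies the non-repetitiveness requirement: because the order is topological, each p-arc is encountered exactly once and, when encountered, its d-end is already optimal, so the update is complete on its single visit and never needs repeating. Therefore the reverse-topological p-arc sequence is precisely a non-repetitive complete-update sequence, Proposition~\ref{p3} applies verbatim, and all p-sources -- in particular the source representing the original problem -- become black, i.e. the optimal solution is built up.

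The main obstacle I anticipate is not the inductive skeleton, which is routine, but pinning down the interaction between the level/topological structure and the colouring rules so that ``complete'' is rigorously justified rather than merely plausible. Specifically, one must be careful that reverse topological order on \emph{p-arcs} correctly forces every p-son of a d-vertex to be finalised before that d-vertex is consulted -- this hinges on the bipartite d-graph structure (each d-vertex has a unique p-in-neighbour but possibly several p-sons, per Definition~1) and on the convention that d-weights are recomputed before every use. Handling the case of shared sub-problems (a p-vertex with multiple p-grandfathers through different d-fathers) requires noting that topological order respects all incoming dependencies simultaneously, so such a p-vertex is still finalised before any of its ancestors' arcs are processed. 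Once these structural points are nailed down, the reduction to Proposition~\ref{p3} is immediate.
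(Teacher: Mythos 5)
Your proposal is correct and follows essentially the same route as the paper: the key observation that in the cycle-free case d-vertices turn black exactly when all their p-sons are black (so a reverse topological order of the p-arcs yields only non-repetitive complete updates), followed by an appeal to Proposition~\ref{p3} to conclude that such a sequence contains an optimal-arc-sequence and builds up the optimal solution. Your explicit induction over levels merely fills in the details that the paper's one-sentence justification leaves implicit.
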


\begin{proof} Since the colours of the d-vertices surely become
black after all their p-sons have already become black, any
reverse topological \mbox{order of all p-arcs con}\-tinuously applies
non-repetitive complete updates. \mbox{According to the previous}
proposition these arc-sequences surely include an
optimal-arc-sequence, and consequently results in building up the
optimal solution of the problem.  
\end{proof}

\begin{proposition}\label{p5} If an optimization problem satisfies the
principle of the optimality, then there exists a finite multiple
complete arc-sequence of the attached structural-dependencies
d-graph that includes an optimal-arc-sequence, and consequently,
the corresponding updating-sequence results in building up the
optimal solution of the problem.
\end{proposition}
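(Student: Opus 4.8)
The plan is to adapt the Bellman-Ford strategy described in the paper: repeatedly traverse all p-arcs of the structural-dependencies d-graph until the concatenated arc-sequence contains the (necessarily finite) optimal-arc-sequence as a subsequence. First I would invoke the principle of the optimality to guarantee that the optimal-dependencies-tree exists, and hence that an optimal-arc-sequence $e_1, e_2, \ldots, e_L$ exists, namely the reverse topological ordering of the p-arcs of this tree. Because the optimal-dependencies-tree is a finite rooted sub-tree of the d-graph, the length $L$ is finite and bounded by the number of p-vertices.

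Next I would construct the desired \emph{multiple complete arc-sequence} $S$ by concatenating $L$ copies of one fixed but otherwise arbitrary complete arc-sequence (an ordering of \emph{all} p-arcs), so that $S$ consists of $L$ successive passes over every p-arc. The claim to be established is that $S$ contains $e_1, \ldots, e_L$ as a subsequence, realized at a rate of (at least) one new arc of the optimal sequence per pass.

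The heart of the proof is an induction on the pass index $j$, asserting that after the $j$-th pass the arcs $e_1, \ldots, e_j$ have all been applied as \emph{complete} updates in the correct relative order. The base case $j=0$ is vacuous. For the inductive step, by the inductive hypothesis the p-sons feeding the d-son of $e_{j+1}$ have already reached their optimal (black) weights during the first $j$ passes; this is exactly the property that places $e_{j+1}$ next in a reverse topological order of the optimal tree. Hence, when $e_{j+1}$ is encountered somewhere during pass $j+1$, its d-son can be recomputed to its optimal weight and the update along $e_{j+1}$ is complete, so the prefix $e_1, \ldots, e_{j+1}$ appears as a subsequence of the first $j+1$ passes.

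I expect the main obstacle to be the bookkeeping that a single pass advances the realized prefix by at least one arc even though, within that pass, the arcs may be traversed in an order incompatible with the optimal one. The resolution is that $e_{j+1}$ depends only on $e_1, \ldots, e_j$, which were already applied in \emph{earlier} passes, so their position inside the current pass is irrelevant. After $L$ passes the whole optimal-arc-sequence is embedded as a subsequence of $S$; since updating operations are monotone and gratuitous updates have at worst null effect, the presence of this optimal-arc-sequence suffices to build up the optimal weight of the source p-vertex, as already guaranteed for optimal-updating-sequences (cf.\ Proposition \ref{p3}). Finally, $S$ is the concatenation of $L$ finite passes and is therefore finite, which completes the argument.
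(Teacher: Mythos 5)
Your proof is correct and takes essentially the same approach as the paper's: repeat a fixed complete arc-sequence once per p-arc of the optimal-dependencies-tree, so that each pass realizes at least one further arc of the optimal-arc-sequence, giving a finite multiple complete arc-sequence that embeds it. Your explicit induction on the pass index merely formalizes the paper's terser worst-case count of updating-tours.
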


\begin{proof} The existence of such an arc-sequence immediately
results from the facts that: (1) Any complete arc-sequence
contains all arcs of the optimal-de\-pen\-den\-cies-tree; (2) The
optimal-dependencies-tree is finite. If we repeat a complete
arc-sequence that includes the arcs of the
optimal-dependencies-tree according to their topological order
(worst case), \mbox{then we need as many upda}\-ting-tours as the number of
the p-arcs of the \mbox{optimal-dependencies-tree is}.
\end{proof}

\subsubsection{Algorithm d-TOPOLOGICAL}

If the structural-dependencies d-graph attached to the problem is
cycle free (called: structurally acyclic DP problems), then this
input graph can also be viewed as optimization-dependencies-graph.
Considering a reverse topological order of the \emph{all}
vertices, all updating operations (along the corresponding
p-arc-sequence) will be complete (see Proposition \ref{p4}).
Additionally, along the arcs of the optimal d-spanning-tree we
have optimal updates. Accordingly, this algorithm (called
d-TOPOLOGICAL) results in determining the optimal solution of the
problem.

\subsubsection{Algorithm d-DIJKSTRA}

If the structural-dependencies d-graph contains cycles a proper
vertices order involving complete updates along the corresponding
p-arc-sequence cannot be structurally established. In this case we
should try to build up the optimization-dependencies d-graph (more
exactly a reverse topological order of its all p-arcs) on the fly,
parallel with the bottom-up optimization process.

Implementing a sequence of continuous complete updates presumes to
identify at each stage of the building process the black
d-vertices based on which we have not performed complete updating
operations (Proposition \ref{p2} guaranties that such d-vertices exist
continuously). A d-vertex is black only if all its p-sons are
already black. Consequently, the basic question is: Can we
identify the black p-vertices at each stage of the building
process? As we mentioned above a p-vertex is certainly black after
we have performed complete updates based on \emph{all} its d-sons
(The last effective update was performed on the basis of optimal
d-son). Algorithms based on the topological order of the all arcs
exploit this \emph{structural} condition. However, a p-vertex may
have become black before we have performed complete updating
operation along all its p-out-arcs. Conditions making perceptible
such black p-vertices may also be deduced from the principle of
the optimality.  For example, if the DP problem has a greedy
character too, then it may work the following condition: the
``best'' d-vertex (having \emph{relatively} optimal weight) among
those based on which we have not performed complete updating
operations can be considered black (Called: Cyclic DP problems
characterized by greedy choices). Since Dijkstra algorithm
applies this strategy, we call this algorithm: d-DIJKSTRA.

\subsubsection{Algorithm d-BELLMAN-FORD}

If we cannot establish one complete arc-sequence including an
optimal-arc-sequence (we will call such problems: DP problems
without 'negative cycles'), we are forced to repeat the
updating-tour along a complete (even arbitrary) arc-sequence of
the input graph (structural-dependencies d-graph) until this
\emph{multiple arc-sequence} will include the desired optimal
updating sequence (see Proposition \ref{p5}). An extra tour without any
effective updates indicates that the optimal solution has been
built up. If the arbitrary arc-sequence we have chosen includes
the arcs of the optimal-dependencies-tree in topological order
(worst case), then we need as many updating-tours as the number of
the p-arcs of the optimal-dependencies-tree is. Since Bellman-Ford
algorithm applies this strategy, we call this algorithm:
d-BELLMAN-FORD.


\begin{center}
\begin{figure}[th!]
\centering\includegraphics[scale=0.7]{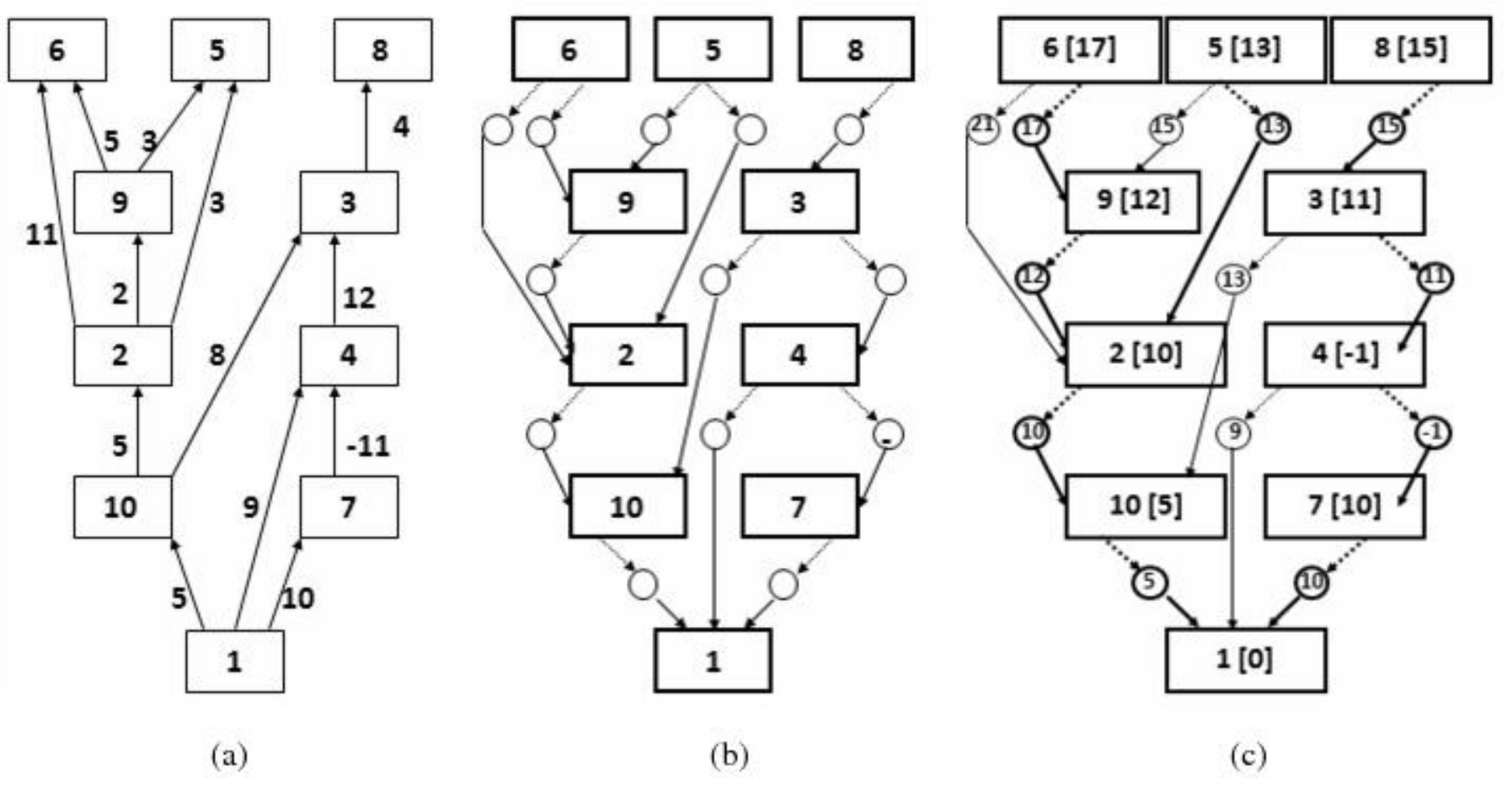}  
\caption{(a) Acyclic digraph; (b) Structural-dependencies
d-graph; (c) Optimally weighted optimization-dependencies d-graph
(bolded lines represent the arcs of the optimal-dependencies
d-graph)}\label{f5}
\end{figure}

\begin{figure}[th!]
\centering\includegraphics[scale=0.6]{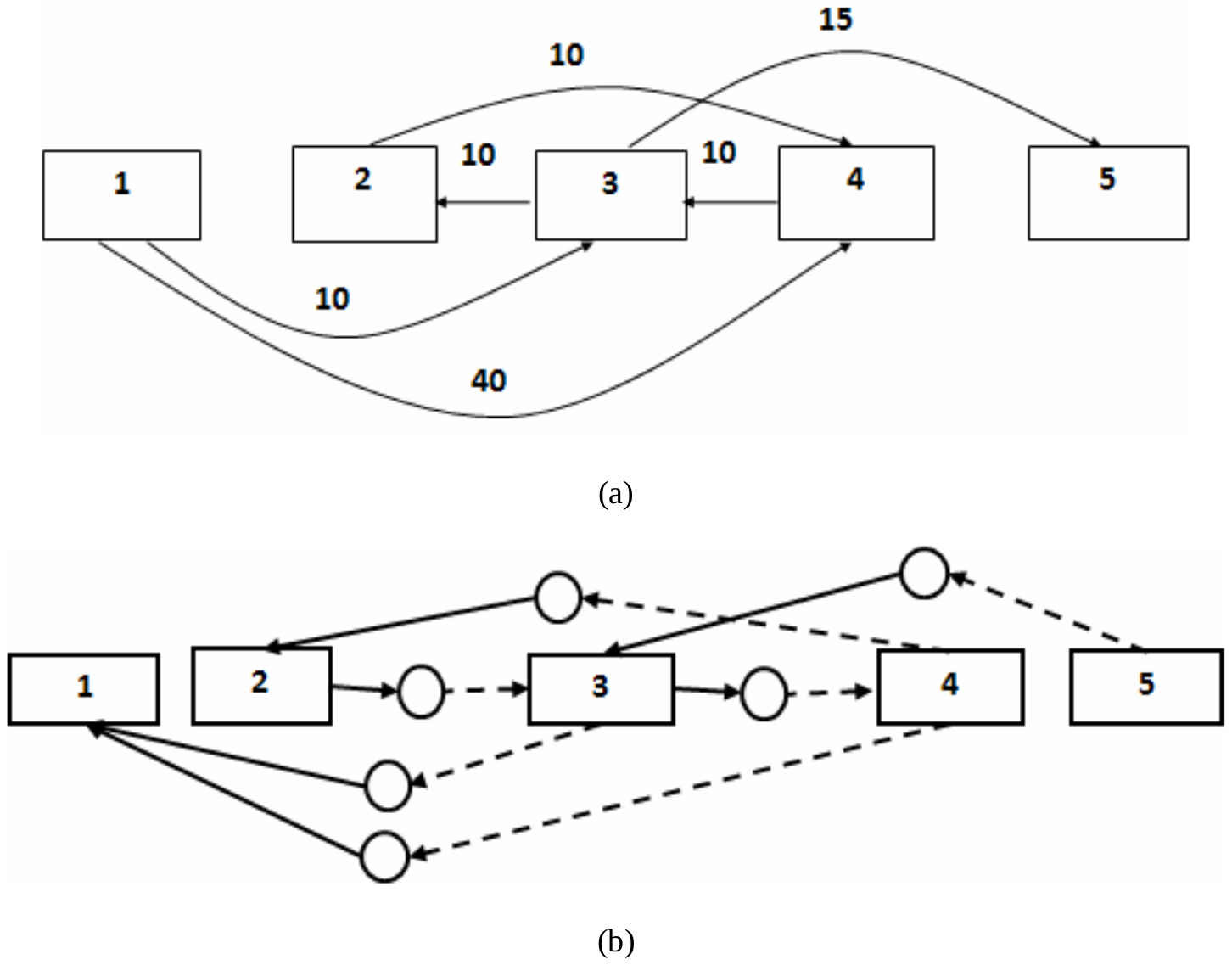} 
\end{figure}
\end{center}

\vspace*{3cm}
 
\begin{figure}[t!]
\centering\includegraphics[height=14cm,width=12cm]{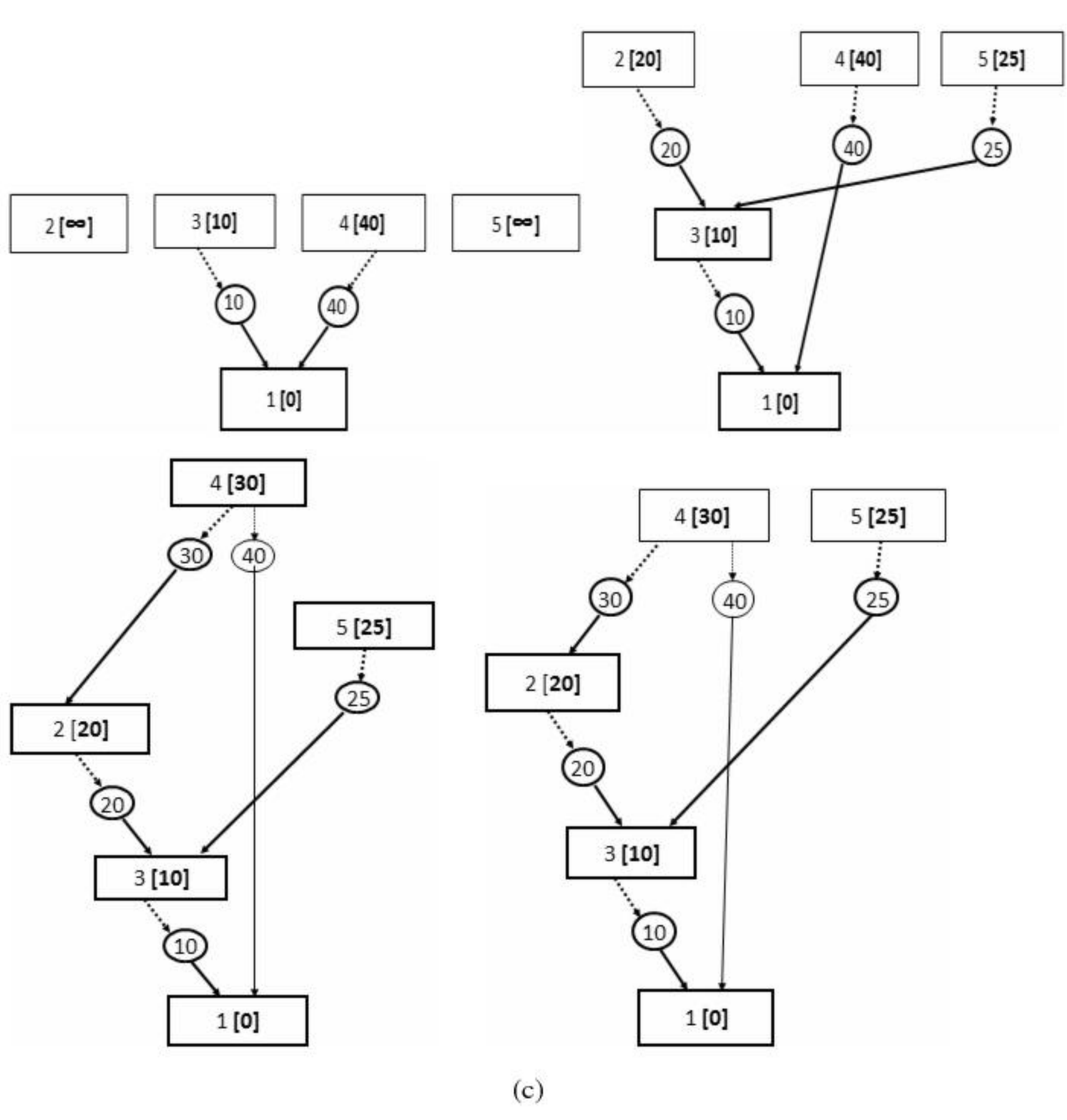} \\
\caption{(a) Cyclic digraph without negative weighted arcs; (b)
Cyclic structural-dependencies d-graph; (c) The bottom-up
building-up process of the optimally weighted
optimization-dependencies d-graph (bolded lines represent the arcs
of the optimal-dependencies d-graph)} \label{f6}
\end{figure}

\vspace*{3cm}
\begin{figure}[th!]
\centering\includegraphics[scale=0.7]{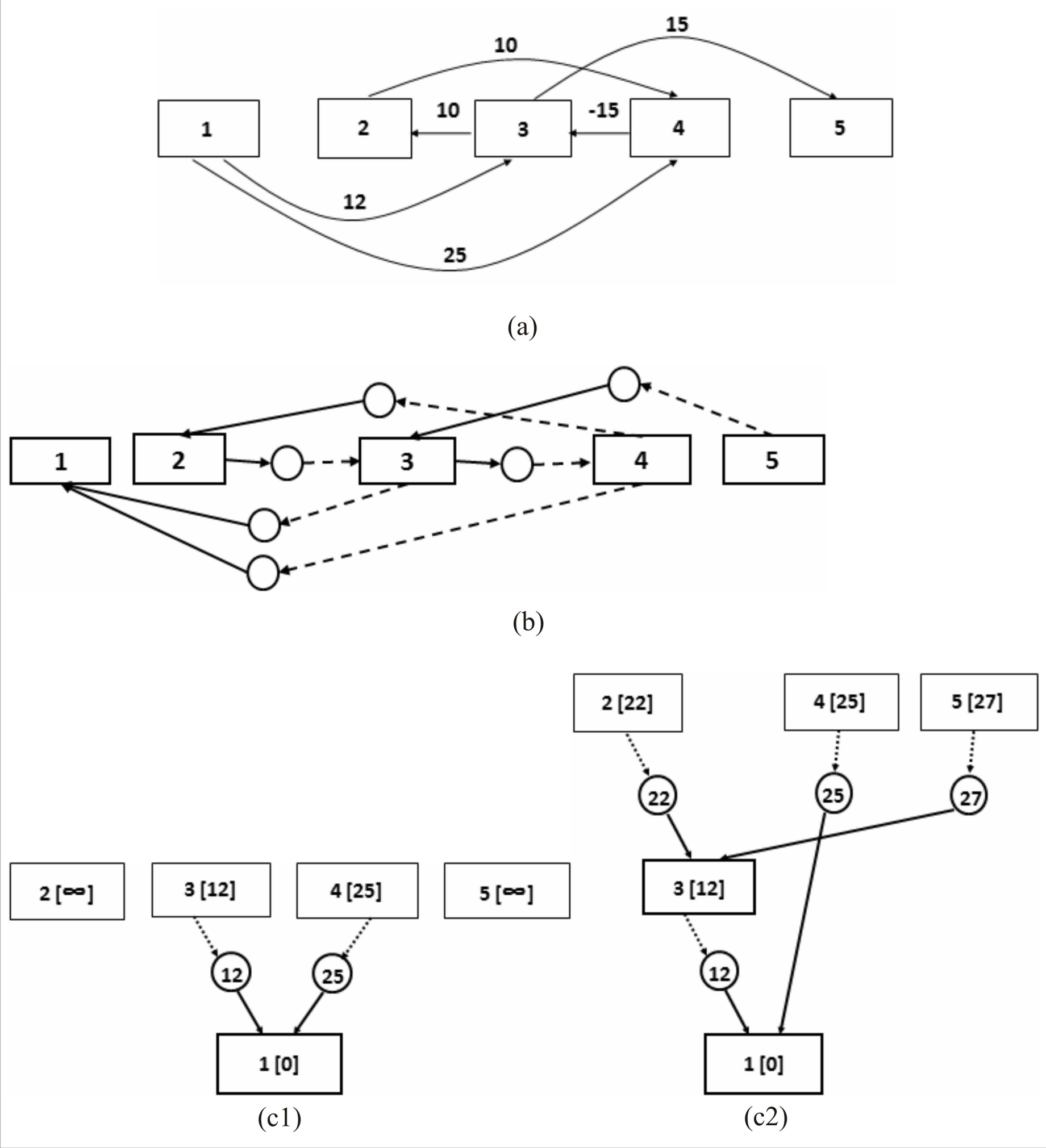}  
\end{figure}

\begin{figure}[ht!]
\centering\includegraphics[height=14cm]{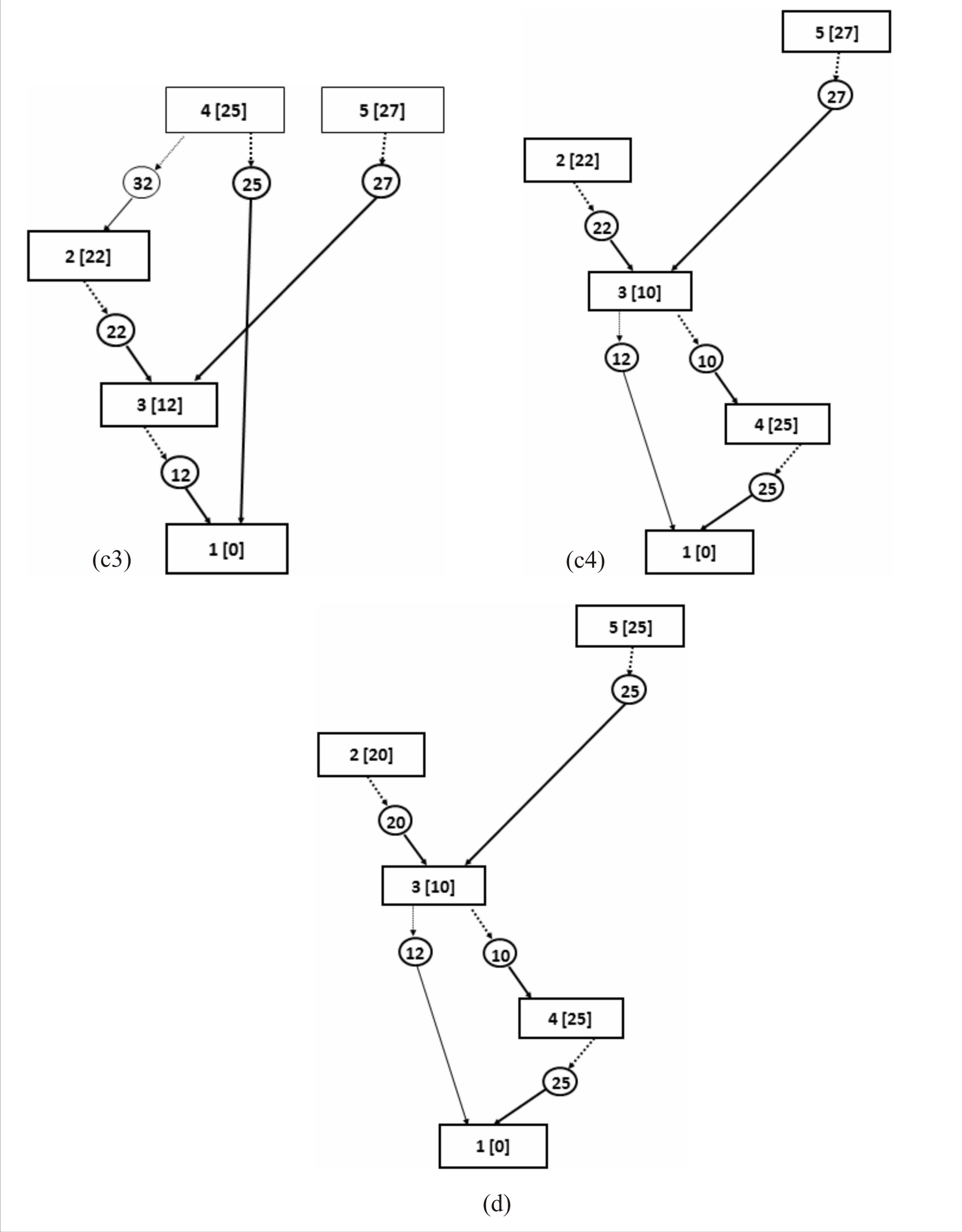}
\caption{(a) Cyclic digraph with negative weighted arcs, but
without negative cycles; (b) Cyclic structural-dependencies
d-graph; The bottom-up building-up process of the optimally
weighted optimization-dependencies d-graph (bolded lines represent
the arcs of the optimal-dependencies d-graph): (c1--c4) first
updating-tour, (d) second updating-tour}\label{f7}
\end{figure}

\vspace*{6cm}
\subsubsection{A relevant sample problem}

As an example we consider the single-source shortest problem:
Given a weighted digraph determine the shortest paths from a
source vertex to all the other vertices (destination vertices).
The attached figures (see Figures \ref{f5}, \ref{f6}, \ref{f7}) illustrate the level by
level building process of the optimization-dependencies d-graph
concerning to the algorithms d-TOPOLOGICAL, d-DIJKSTRA and
d-BELLMAN-FORD (Regarding this problem we have only $1 \rightarrow
1$ dependencies between neighbour p-vertices).

\section{Conclusions}

Introducing the generalized version of d-graphs we received a more
effective tool for modelling a larger class of DP problems
(Hierarchic d-graphs introduced in \cite{8} and Petri-net based models
\cite{14,15,16} work only in the case of structurally acyclic problems;
Classic digraphs \cite{11,12}  can be applied when during the
decomposing process at each step the current problem is reduced to
only one sub-problem). The new modelling method also makes
possible to classify DP problems (Structurally acyclic DP
problems; Cyclic DP problems characterized by greedy choices; DP
problems without 'negative cycles') and the corresponding DP
strategies (d-TOPOLOGICAL, d-DIJKSTRA, d-BELLMAN-FORD) in term of
graph theory.

If we have proposed to develop a general software-tool that
automatically solves DP problems (getting as input the functional
equation) we should combine the above algorithms as follows:

\begin{itemize}\addtolength{\itemsep}{-0.6\baselineskip}
\item We represent explicitly the d-graph described implicitly by the
functional equation.
\item We try to establish the reverse topological order of the vertices by a DFS like algorithm (d-DFS).
This algorithm can also detect possible cycles.
\item If the graph is cycle free, we apply algorithm d-TOPOLOGICAL, else we try to
apply algorithm d-DIJKSTRA.
\item If no mathematical guarantees that
we reached the optimal solution, then choosing as complete
arc-sequence for algorithm d-BELLMAN-FORD the arc-sequence
generated by algorithm d-DIJKSTRA (completed with unused arcs) in
the first updating-tour we verify the d-DIJKSTRA result. We repeat
the updating tours until no more effective updates.
\end{itemize}

Such a software-application should be able to save considerable
software development costs.

\section{Acknowledgements}

This research was supported by the Research Programs Institute of Sapientia Foundation, Cluj, Romania.

\bigskip
\rightline{\emph{Received:  September 1, 2010  {\tiny \raisebox{2pt}{$\bullet$\!}}  Revised: November 10, 2010}}

\end{document}